\documentclass[letter]{IEEEtran}
\IEEEoverridecommandlockouts
\usepackage{amsmath,amsthm,amsfonts,amssymb,amsbsy,bm,paralist,color,cite}
\usepackage{graphicx,algorithmic,algorithm,relsize}
\usepackage{multicol}
\usepackage{multirow}
\usepackage{caption}
\usepackage{epstopdf}
\usepackage{textcomp}
\usepackage{xcolor}
\usepackage{tabularx}
\usepackage{subcaption}

\makeatletter
 	\newtheorem{proposition}{Proposition}
 	
\begin{document}
	\title{Deep Reinforcement Learning Based Optimization for IRS Based UAV-NOMA Downlink Networks\\
		{\footnotesize }
		\author{Shiyu Jiao, Ximing Xie and Zhiguo Ding,~\IEEEmembership{Fellow,~IEEE}\vspace{-0.9cm}
			\thanks{Shiyu Jiao, Ximing Xie and Zhiguo Ding are with School of Electrical and Electronic Engineering, The University of Manchester, M13 9PL, U.K. (e-mail: shiyu.jiao@manchester.ac.uk, ximing.xie@manchester.ac.uk and zhiguo.ding@manchester.ac.uk).}}
	}
	
\maketitle

	\begin{abstract}
		This paper investigates the application of deep deterministic policy gradient (DDPG) to intelligent reflecting surface (IRS) based unmanned aerial vehicles (UAV) assisted non-orthogonal multiple access (NOMA) downlink networks. The deployment of the UAV equipped with an IRS is important, as the UAV increases the flexibility of the IRS significantly, especially for the case of users who have no line of sight (LoS) path to the base station (BS). Therefore, the aim of this letter is to maximize the sum rate by jointly optimizing the power allocation of the BS, the phase shifting of the IRS and the horizontal position of the UAV. Because the formulated problem is not convex, the DDPG algorithm is utilized to solve it. The computer simulation results are provided to show the superior performance of the proposed DDPG based algorithm. 
	\end{abstract}
\vspace{-0.5cm}
	\section{Introduction}
	Intelligent reflecting surfaces (IRS) have been recognized as one of the promising technologies for six-generation (6G) wireless communications\cite{8766143}, since they have shown excellent features on better spectrum-, energy-, and cost-efficiency \cite{zhao2019survey}. IRS can be viewed as a low-cost antenna array consisting of a large number of programmable reflecting elements \cite{8811733}. A variety of proven techniques, such as massive multiple-input multiple-output (massive-MIMO) and cooperative communications, only focus on how the transceiver can adapt to the channel environment, while IRS has the capability to control the wireless communication propagation environment \cite{8742603}. A typical scenario to apply IRS is when the direct links from the base station (BS) to users are blocked by buildings or mountains, which means IRS can create extra propagation paths to guarantee the quality of service (QoS).    

	Inspired by the merits of non-orthogonal multiple access (NOMA) such as high spectrum efficiency \cite{7842433}, we combine NOMA with the IRS, and \cite{9079918} has illustrated the better performance of the IRS-NOMA compare to the conventional orthogonal multiple access (OMA). On the other hand, unmanned aerial vehicles (UAV), as another 6G promising technique \cite{chowdhury20206g}, can be used to enhance the flexibility of IRS-NOMA. Our prior works \cite{jiao2020joint} jointly optimized beamforming and phase shift with pre-optimized UAV position and derived the closed-form of the optimal beamforming for a 2-user IRS-UAV-NOMA downlink system. However, the time-varying multi-user scenario is closer to the real wireless communication systems, and conventional optimization methods, like convex optimization, are hard to solve these non-convex jointly optimization problems. Alternatively, artificial intelligence (AI), such as deep learning (DL) and deep reinforcement learning (DRL), has been successfully applied to a variety of wireless communication problems \cite{ding2020harvesting,8815412,cui2019multi}. Unlike DL which needs a huge number of training labels, DRL based methods allow wireless communication systems to learn by interacting with the environment. Hence, DRL is more appropriate, as training labels are very hard to obtain in real-time wireless communication systems. Furthermore, compared to other DRL approaches such as deep Q network (DQN), deep deterministic policy gradient (DDPG) is applicable to the case with the high-dimension continuous action space.
	
 	This letter investigates the application of the DRL based methods to the multi-user IRS-UAV-NOMA downlink system. The DDPG algorithm is introduced into the DRL framework to optimize the power allocation of the BS, the phase shifting of the IRS and the horizontal position of the UAV simultaneously. Computer simulation results are provided to demonstrate the superior performance of the proposed algorithm on sum rate and robustness.

	\section{System model and Problem formulation}
	\begin{figure}[t]   
		\captionsetup{font={footnotesize}}
		\centering
		\includegraphics[width=0.6\linewidth]{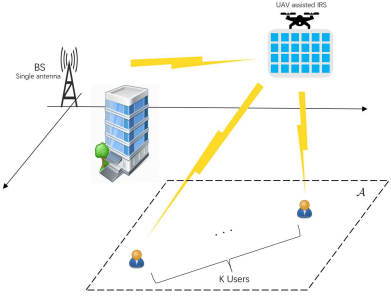}\\
		\caption{UAV based IRS-assisted NOMA downlink system.}\label{system_model}
		\vspace{-0.5cm}
	\end{figure} 
	Consider an IRS-UAV-NOMA network as shown in fig.\ref{system_model}. It is assumed that each node is equipped with a single antenna. The base station (BS) serves \textit{K} users (denote the users set by $\mathcal{K}$) who are randomly distributed in a certain area $\mathcal{A}$. Assume that their direct links to BS are blocked, for example, buildings and mountains. Hence, the UAV-equipped IRS is deployed to create reflection links between the users and the BS, where the IRS is equipped with \textit{N} passive phase shift elements. Assume that the UAV flies at a fixed altitude over area $\mathcal{A}$ autonomously, and starts at a fixed charge point. The channels are assumed as quasi-static frequency flat-fading, and the channel state information (CSI) is assumed to be known at the BS and the UAV-equipped IRS, where the energy consumption and flight duration issues of the UAV \cite{sun2013minimum} are neglected. Note that due to the used DDPG method, the proposed algorithm is applicable to the case, where the channels are time-varying between time slots, but remain constant within one time slot. Denote the channel vectors between the BS and the IRS by $\bm{g}\in \mathbb{C}^{N\times 1}$ and the channel vectors between the IRS and users by $\bm{h}_r\in \mathbb{C}^{N\times K}$, respectively. The small scale fading and the path loss are both considered. According to the NOMA principle, the BS transmits the superposition coding to all users. Hence, the received signal at each user is given by

	\begin{equation}\label{received signal}
		y_k = \bm{h}_{rk}^H\bm{\Phi} \bm{g} \sum_{i=1}^{K}\rho_i s_i + n_k, \quad k=1,\cdots, K ,
	\end{equation} 
	where $\bm{h}_{rk}\in \mathbb{C}^{N\times 1}$ is the $k$-th user's reflecting channel vector, $\Phi = \textnormal{diag}(e^{j\theta_1}, e^{j\theta_2}, \cdots, e^{j\theta_N})$ is the IRS diagonal phase shift matrix, $\theta_n \in [0, 2\pi]$ is the phase shift of the $n$-th element, $\rho_i \in [0,1]$ is the BS transmitted power allocation coefficient and $\sum_{i=1}^{K}\rho_i= 1$, $s_i$ is the transmitted signal for the $i$-th user that satisfying $\mathbb{E}[s_i^2] = 1$ and $n_k$ is the noise which follows $\mathcal{CN}(0,\sigma^2)$. 
	Since the UAV is deployed, we use $v(x,y)$ to denote the IRS-UAV horizontal position and $h_I$ for its height. The BS is located at the original point (0,0) and the BS height is $h_B$. $u_k(x_k,y_k),\quad k = 1,\cdots,K$ denotes the horizontal position of the $k$-th user. Hence, the distance between the BS and the IRS can be derived as $d_{BI}=\sqrt{x^2+y^2+(h_B-h_I)^2}$ and the distance between IRS and the $k$-th user is $d_{I{u_k}}=\sqrt{(x-x_k)^2+(y-y_k)^2+h_I^2}$. Considering the path loss, the channel gain for the $k$-th user can be rewritten as  
\vspace{-0.3cm}
	\begin{equation} \label{channelgain}
		\bm{h}_k = \frac{\bm{h}_{rk}^H \bm{\Phi} \bm{g}}{(d_{BI}d_{Iu_k})^\alpha},
	\end{equation}
	where the $\alpha$ is the path loss coefficient.
	
	To implement the successive interference cancellation (SIC) for NOMA users, the channels' quality should be gained first. Assume that and the weakest user (who has the worst channel) is the 1-st user and the strongest user (who has the best channel) is the $K$-th user. According to the SIC principle, the $j$-th ($1\leq j \leq K$) user needs to decode the signals of all $j-1$ weaker users so that the $j$-th user can remove those signal from the superposed received signal. Therefore, the signal-to-interference-plus-noise ratio (SINR) for the $j$-th user to decode the $t$-th ($t\leq j-1 \leq K$) user's signal as follows:
	\begin{equation}\label{SINRttoj}
		\textnormal{SINR}_{t\to j} = \frac{{|\bm{h}_j|}^2 P_{max}\rho_t}{\sum_{i=t+1}^{K}{|\bm{h}_j|}^2P_{max}\rho_i+\sigma^2}.
	\end{equation}
	Afterwards, the user $j$ can decode its own signal by simply treating the signal of all the rest users as interference. The SINR for the $j$-th user to decode its own signal is given by     
	\begin{equation}\label{SINRjtoj}
		\textnormal{SINR}_{j\to j} = \frac{{|\bm{h}_j|}^2 P_{max}\rho_j}{\sum_{i=j+1}^{K}{|\bm{h}_j|}^2P_{max}\rho_i+\sigma^2},
	\end{equation}
	where $P_{max}$ is the maximum transmission power. Note that the data rate for each user to decode its own signal can be calculated by (\ref{SINRjtoj}) and $R = \textnormal{log}(1+\textnormal{SINR})$. Denote the minimum target data rate by $R_{min}$. To make sure SIC can be successfully implemented, the data rate of the $j$-th user decoding the $t$-th user's signal is required no smaller than the data rate of the $t$-th user decoding its own signal, which means $R_{t \to j}\geq R_{t \to t}\geq R_{min}, \forall t<j$. The the problem formulation will be described next in detail.
	
	Our aim is to maximize the sum rate by jointly optimizing the power allocation ${\rho_i}$ at the BS, the phase-shifting $\Phi$ of the IRS and the horizontal position $v(x,y)$ of the UAV. 
	Hence, the optimization problem can be formulated as follows:
\vspace{-0.1cm}
	\begin{subequations} \label{P1}
		\begin{align}
			(P1):\max_{\{ \Theta,\rho,v\}} \quad & \sum_{t=1}^{K}R_{t\to t} \\
			\mbox{s.t.} \quad & R_{t \to t} \geq R_{min}, \forall t \in \mathcal{K}, \label{6b}\\
			& R_{t \to j} \geq R_{t \to t}   \quad \forall t, j \in \mathcal{K}, t>j, \label{6c} \\
			& \sum_{k=1}^{K}\rho_k \leq 1, \label{6d} \\
			& v(x,y)\in \mathcal{A},\label{6e} \\
			& 0 \leq \theta_n \leq 2\pi, \quad n = 1, \cdots, N. \label{6f}	
		\end{align}
	\end{subequations} 
	Constraint (\ref{6b}) is to guarantee the QoS for all users, and (\ref{6c}) ensures that the SIC processing can be implemented successfully. Constraint (\ref{6d}) is the BS total transmission power constraint and (\ref{6e}) is to restrict the UAV flies in the feasible certain area. The last constraint (\ref{6f}) is the angle constraint of each element on IRS. The problem (P1) is non-convex and hard to find the global optimal solution due to the coupled variables $\{\Theta, \rho, v\}$. Hence, in this letter, we propose a robust and low-complexity DRL based framework to solve the problem (P1).
	\vspace{-0.5cm}
	\section{Deep Deterministic Policy Gradient (DDPG)-based optimization}
	In this section, the DDPG algorithm is introduced briefly. Afterwards, we discuss how the formulated problem is processed, and how can the DDPG framework be applied. 
	
	\textit{1)  Introduction to Deep Deterministic Policy Gradient:} It is easy to know that actions of the problem (P1) are all continuous, the DQN algorithm can not be used as it can be only used for discontinuous action. The policy gradient method is unsatisfactory under wireless communication context \cite{feng2020deep}, since its drawback of convergence is very slow. Hence, we propose a DDPG-based algorithm to solve the problem (P1).
	
	DDPG is a model-free, off-policy actor-critic algorithm by applying the deep function approximators. It learns policies in high-dimension, continuous action space \cite{lillicrap2015continuous}. In DDPG algorithm, a specific action $a$ is deterministically mapped by the deterministic policy gradient (DPG) algorithm. Then critic the action by using the Q network $Q(s,a;\theta^q)$, where $\theta^q$ denotes the parameters of the critic network. Therefore, the aim of DDPG is to maximize the output Q value. Additionally, DDPG uses experience replay to solve that the hypothesis of samples is independently and identically distributed is invalid, when the samples are generated from exploring sequentially in an environment. To solve the unstable and divergence trend of Q learning, DDPG uses soft update \cite{lillicrap2015continuous}, which means updating target network's parameters by slowly tracking the learned evaluation network. For implementing soft update, we first copy the actor network $Q'(s,a;\theta^{q'})$ and the critic network $\mu'(s|\theta^{\mu'})$ as two target networks to calculate the corresponding target values, and then update their parameters according to the equation
	\begin{equation}\label{softupdate}
		\theta' \leftarrow \tau \theta + (1-\tau)\theta ,
	\end{equation} 
	where $\tau\ll 1$. On the other hand, exploration is also a challenge for learning in continuous action space \cite{lillicrap2015continuous}. In DDPG algorithm, adding randomly generated noise is a simple and efficient way to implement exploration
	\begin{equation}\label{exploration}
		\mu'(s_t) = \mu(s_t;\theta^\mu) + \mathcal{N},
	\end{equation} 
	where $\mathcal{N}$ is the noise that depends on the environment. 
	
	\textit{2) The DRL processing:} In the communication system model fig.\ref{system_model}, we define the time-varying channels as the environment and treat the IRS-UAV as the agent. The rest corresponding elements are defined as 
	
	$\bullet$ \textit{State space:} the state of time step $t$ is defined as
	\begin{equation}\label{st}
		\begin{aligned}
		s_t = & \left[R_{1}^{(t-1)},\cdots,R_{K}^{(t-1)}, \theta_1^{(t-1)},\cdots,\theta_{2N}^{(t-1)},\right.
		\\ \nonumber
		 & \left. \rho_1^{(t-1)},\cdots,\rho_k^{(t-1)},x^{(t-1)},y^{(t-1)} \right],
	 \end{aligned}
	\end{equation}
	where $\{R_{1}^{(t-1)},\cdots,R_{K}^{(t-1)}\}$ are all users' data rate at time $t-1$, $\{\theta_1^{(t-1)},\cdots,\theta_{2N}^{(t-1)}\}$ denotes real and imaginary parts of the IRS phase shift, $\{\rho_1^{(t-1)},\cdots,\rho_k^{(t-1)}\}$ denotes the power allocation to each user's signal and $\{x^{(t)},y^{(t)}\}$ represents the UAV's horizontal position. 
	
	$\bullet$ \textit{Action space:} the action of time step $t$ is defined as
	\begin{equation}\label{at}
		a_t = \begin{bmatrix}
			\theta_1^{(t)},\cdots,\theta_{2N}^{(t)}, \rho_1^{(t)},\cdots,\rho_k^{(t)},x^{(t)},y^{(t)}
		\end{bmatrix}.
	\end{equation}
	At the time step $t$, the agent inputs the state $s_t$ to obtain the corresponding action $a_t$ according to current environment. Then the agent gains the new phase shift $\Phi$, power allocation $\rho_i, i=1,\cdots,k$ and horizontal position $v$.
	
	$\bullet$ \textit{Reward:} Define the sum rate as the reward, which is consistent with our aim
	\begin{equation}\label{rt}
		r_t = R_{sum}^{(t)} = \sum_{k=1}^{K}R_k^{(t)}, k = 1, \cdots, K.
	\end{equation}

	\textit{3) Processing for the formulated problem:} To satisfy the constraints of the problem (P1), we carry out the following manipulations: To guarantee QoS constraint (\ref{6b}), calculate the data rate $R_k^{(t)}$ of each user at each step $t$ to decide if it can achieve the minimum rate. If all the calculated rates satisfy the constraint (\ref{6b}), then store this experience into the replay buffer.  
	If not, set harsh punishment (i.e. zero or a negative reward) for the bad experience then store, to avoid the agent taking bad actions that can not satisfy the constraint (\ref{6b}). If the conventional optimization methods are used, such as convex optimization, the most complicated constraint is (\ref{6c}) who is to ensure the SIC successfully implement. Nevertheless, in our proposed algorithm, channel vectors are calculated after the action $a_t$ comes out at each step $t$ to decide the decoding order, since decoding order depends on the current channel environment and it always satisfies the constraint (\ref{6c}) (see the proposition 1). Note that channel vectors (\ref{channelgain}) are still changing due to the output action of each step even the generated channel vectors are constant in each episode. 
	
	\begin{proposition}
		The SIC constraint (\ref{6c}) will always be satisfied if the decoding order is decided by current channel quality.
	\end{proposition}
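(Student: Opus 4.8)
The plan is to reduce constraint (\ref{6c}) to a monotonicity statement about the SINR as a function of the effective channel gain, and then to invoke the way the decoding order is defined. Since $R=\log(1+\mathrm{SINR})$ is strictly increasing, it suffices to show $\mathrm{SINR}_{t\to j}\ge\mathrm{SINR}_{t\to t}$ for every pair of users with $j>t$, i.e. whenever the stronger user $j$ is required to decode the weaker user $t$'s signal.

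First I would compare (\ref{SINRttoj}) with the expression for $\mathrm{SINR}_{t\to t}$ obtained from (\ref{SINRjtoj}). The two ratios have identical structure: the numerator coefficient is $P_{max}\rho_t$ and the interference term uses the same powers $\{P_{max}\rho_i\}_{i=t+1}^{K}$; the only quantity that differs is the channel gain, $|\bm h_j|^2$ appearing in $\mathrm{SINR}_{t\to j}$ where $|\bm h_t|^2$ appears in $\mathrm{SINR}_{t\to t}$. This suggests introducing the auxiliary function $f(z)=\dfrac{P_{max}\rho_t\, z}{\left(\sum_{i=t+1}^{K}P_{max}\rho_i\right) z+\sigma^2}$ for $z\ge 0$, so that $\mathrm{SINR}_{t\to j}=f(|\bm h_j|^2)$ and $\mathrm{SINR}_{t\to t}=f(|\bm h_t|^2)$.

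Next I would check that $f$ is nondecreasing: a one-line derivative gives $f'(z)=\dfrac{P_{max}\rho_t\,\sigma^2}{\left(\left(\sum_{i=t+1}^{K}P_{max}\rho_i\right) z+\sigma^2\right)^2}\ge 0$ because $\rho_t\ge 0$ and $\sigma^2>0$ (equivalently, cross-multiply the two ratios and cancel the common term $|\bm h_j|^2|\bm h_t|^2\sum_i P_{max}\rho_i$). It is worth noting that this holds for \emph{any} feasible power allocation and phase shift, which is exactly why the constraint can never become binding. Finally, since the decoding order is obtained by sorting the users according to their current channel quality, the statement "user $j$ decodes user $t$" means $j>t$ and hence $|\bm h_j|^2\ge|\bm h_t|^2$ (with $|\bm h_k|^2$ the post-reflection, path-loss-included gain). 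Monotonicity of $f$ then yields $\mathrm{SINR}_{t\to j}=f(|\bm h_j|^2)\ge f(|\bm h_t|^2)=\mathrm{SINR}_{t\to t}$, so $R_{t\to j}\ge R_{t\to t}$, which is precisely (\ref{6c}); as the pair was arbitrary, the whole constraint holds.

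I do not expect a genuine obstacle: the result is essentially immediate once the common interference-plus-noise structure of (\ref{SINRttoj})--(\ref{SINRjtoj}) is recognized. The only point requiring care is to make explicit that "channel quality" is measured by the same gain $|\bm h_k|^2$ of (\ref{channelgain}) that actually drives the SINR expressions, so that the ranking used to fix the decoding order coincides with the quantity that orders the SINRs; with that identification the proof is a two-line monotonicity argument.
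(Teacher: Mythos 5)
Your proposal is correct and is essentially the paper's own argument in slightly different clothing: the paper divides numerator and denominator of the SINRs by $|\bm h_j|^2$ (resp. $|\bm h_t|^2$) so that the only difference is the term $\sigma^2/|\bm h|^2$ in the denominator, while you package the same observation as monotonicity of $f(z)=P_{max}\rho_t z/\bigl(\bigl(\sum_{i=t+1}^{K}P_{max}\rho_i\bigr)z+\sigma^2\bigr)$. Both reduce to the same comparison under $|\bm h_j|\ge|\bm h_t|$, so no substantive difference.
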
	
	\begin{proof}
		Recall equation (\ref{SINRttoj}), its numerator and denominator are divided by $|h_j|^2$ simultaneously (where the case for the weaker $t$-th user shown in (4) can be obtained similarly), then we have
		\begin{equation}
			\begin{aligned}
				\textnormal{SINR}_{t\to j} = \frac{P_{max}\rho_t}{\sum_{i=t+1}^{K}P_{max}\rho_i+\frac{\sigma^2}{|h_j|^2}}
			\end{aligned},
		\end{equation}
		\begin{equation}
			\begin{aligned}
				\textnormal{SINR}_{t\to t} = \frac{P_{max}\rho_t}{\sum_{i=t+1}^{K}P_{max}\rho_i+\frac{\sigma^2}{|h_t|^2}}
			\end{aligned}.
		\end{equation}
		Under the given $|h_j| \geq |h_t|$, we have $\textnormal{SINR}_{t\to j}\geq \textnormal{SINR}_{t\to t}$ that satisfies the SIC constraint.
	\end{proof} 
Therefore, the problem (P1) becomes
	\begin{subequations} \label{P2}
		\begin{align}
			(P2):\max_{\{ \Theta,a,v\}} \quad & \sum_{t=1}^{K}R_{t\to t} \\
			\mbox{s.t.} \quad & R_{t\to t}   \geq R_{min}, \quad \forall t \in \mathcal{K},\\
			\quad & (\textnormal{\ref{6d}}) - (\textnormal{\ref{6f}}).	
		\end{align}
	\end{subequations} 
	
	For the constraint (\ref{6d}), we found that the output from the neural network highly possible exists negative value. To solve this, we can use some functions (e.g. exponential function) to map the output value to the feasible range, and this trick is also valid for the constraint (\ref{6e}). The last constraint (\ref{6f}) can be easily satisfied by reforming the phase shifting part $\{\theta_1^{(t)},\cdots,\theta_{2N}^{(t)}\}$ of the output action  into the diagonal matrix $\Phi^{(t)} = \textnormal{diag}(e^{\theta_1^{(t)}},\cdots,e^{\theta_N^{(t)}})$. 
	
	\textit{4) Working Procedure:} Build four neural networks which are the target actor network $\theta_{\mu'}$, the evaluation actor network $\theta_{\mu}$, the target critic network $\theta_{q'}$ and the evaluation critic network $\theta_{q}$, where $\theta_{\mu'}$ and  $\theta_{\mu}$ use a consistent structure (the same for $\theta_{q'}$ and $\theta_{q}$ ). Creating experience replay buffer $\mathcal{B}$ with capacity $\mathcal{C}$. The whole working procedure is divided into the training data collection stage and the training stage. In each episode, randomly initialize the channel gains $h_k$, the phase shift $\Phi$ and the users' position $u$ in area $\mathcal{A}$. The UAV horizontal position $v$ is initialized at a fixed point. Power allocation $\rho$ is averagely initialized. Afterwards, calculate all users' data rates $R_i^{(t)}, i = 1,\cdots,K$ to obtain the initial state $s_t$. Then the corresponding action $a_t$ will be calculated out by the evaluation actor network by taking the current state $s_t$ as input. The reward $r_t$ is calculated by the equation (\ref{rt}) and the next state $s_{t+1}$ is gained according to (\ref{st}). Store $\{s_t,a_t,r_t,s_{t+1}\}$ into the replay buffer $\mathcal{D}$ as one transition. Once the replay buffer is full, the training stage starts to carry out. During each episode, first store the current transition $\{s_t,a_t,r_t,s_{t+1}\}$ by replacing the oldest transition, then the critic evaluation network samples $N_B$ transitions $\{s_i,a_i,r_i,s_{i+1}\} (i = 1,\cdots, N_B)$ as a minibatch whose size is $N_B$ from the replay buffer $\mathcal{D}$ to calculate the target Q value $y_i$:
	\begin{equation}\label{yi}
		y_i=\left\{
		\begin{array}{lr}
			r_i ,  \\
			r_i+\lambda Q'(s_{i+1},\mu'(s_{j+1};\theta_{\mu'};\theta_{q'})), 
		\end{array}
		\right.
	\end{equation}
	where $\lambda$ is the discount factor for future reward. Then minimizing the loss function (\ref{lf}) to update the critic evaluation network
	\begin{equation}\label{lf}
		L(\theta_q) = \frac{1}{N_B} \sum_{i=1}^{N_B}(y_i-Q(s_i,a_i;\theta_q))^2.
	\end{equation}
	After that, the actor evaluation network will be updated by applying the gradient ascent  (\ref{pg})
	\begin{equation}\label{pg}
		\nabla_{\theta_{\mu}}J =  \frac{1}{N_B} \sum_{i=1}^{N_B} (\nabla_aQ(s_i, \mu(s_i;\theta_{\mu});\theta_q)|\nabla_{\theta_{\mu}}\mu(s_i;\theta_{\mu})).
	\end{equation}
	Next, the actor target network and critic target network should be updated by the soft update method (\ref{softupdate}). The algorithm \ref{alg:1} describes in more detail.
	
	\textit{Complexity analysis:} The time complexity of DDPG is no greater than $2\mathcal{O}(D_{in}^4+D_{in}^5)$ (where $D_{in} = 2(K+N+1)$ is the input dimension) which is smaller than the complexity $\mathcal{O}((6K^2+N^2+M)^{3.5})$ of the semidefinite relaxation (SDR)-based optimization  \cite{mu2019exploiting}, which is consistent with the running time results of \cite{9322372}.
	\vspace{-0.3cm}
	\section{Numerical Results}
	In this section, we carry out the proposed DDPG-based algorithm and present the results to analyse its performance. As the fig. (\ref{system_model}) shown, the BS is deployed at the origin point (0,0), the IRS-UAV starts at the point (50,0), and users are randomly distributed in the area $\mathcal{A}$ which is (45,45), (55,45), (55,55) and (45,55). In each episode, users' positions are assumed fixed. As assumed before, the channels between the BS and the IRS, and the channels between the IRS and users are all LoS, the Rician fading channel is used according to the following equation: 
	\begin{equation}
		\bm{G} = \bm{\overline{H}}\sqrt{\frac{\Omega}{\Omega+I_N}} + \bm{H_R}\sqrt{\frac{1}{\Omega+I_M}} ,
	\end{equation}
	where $\bm{\overline{H}}$ represents the deterministic component, $\bm{H_R}$ denotes the Rayleigh fading component and $\Omega$ is the Rician K-factor. In our simulations, we set $\Omega = 10$. For the large scale fading, the path loss coefficient is $\alpha = 2$. According to the channel assumption in the section I, the channels are randomly generated for each episode, but they are fixed within each episode. On the other hand, the altitude of the BS is $h_B = 20$ and deploy the IRS-UAV at $h_U = 30$. For other parameters, we set noise power is $\sigma^2 = -60$dB and the minimum target rate for each user is $R_{min} = 1.2\textnormal{(bps/Hz)}$.
	For the Actor network, the two layers fully connected network (i.e. two-layered DNN) is used for both actor evaluation network and actor target network. The dimensions of the input layer and the output layer are $2(N+K+1)$ and $2N+K+2$ respectively which are determined by the dimensions of state and action. Note that a complex number should be treated as a two-dimension array (real part and imaginary part). On the other hand, the first layer uses the ReLU function as the activation function while the output layer uses tanh($\cdot$) function to gain enough gradient. For the critic network, similarly, we use two layers fully connected network. However, the structure becomes the following: input the state to one layer and input the action to another layer, then add these two layers' output together, then through the ReLU function as the input of the output layer. Batch normalization is used for both the actor and the critic. The setting of hyper parameters is following: learning rate for training evaluation network $\beta = 0.001$, discount factor $\lambda = 0.95$, learning rate for soft update $\tau = 0.005$, size of replay buffer $\mathcal{C} = 50000$, number of episodes $J = 1000$, number of steps in each episode $T = 500$ and size of sampled mini-batch $N_B = 16$. Additional, the added noise in equation (\ref{exploration}) for exploration is set as complex Gaussian noise with zero mean and 0.1 variance.\\

\begin{figure}
	\centering
	\begin{subfigure}{0.24\textwidth}
		\centering
		\includegraphics[width=\textwidth]{./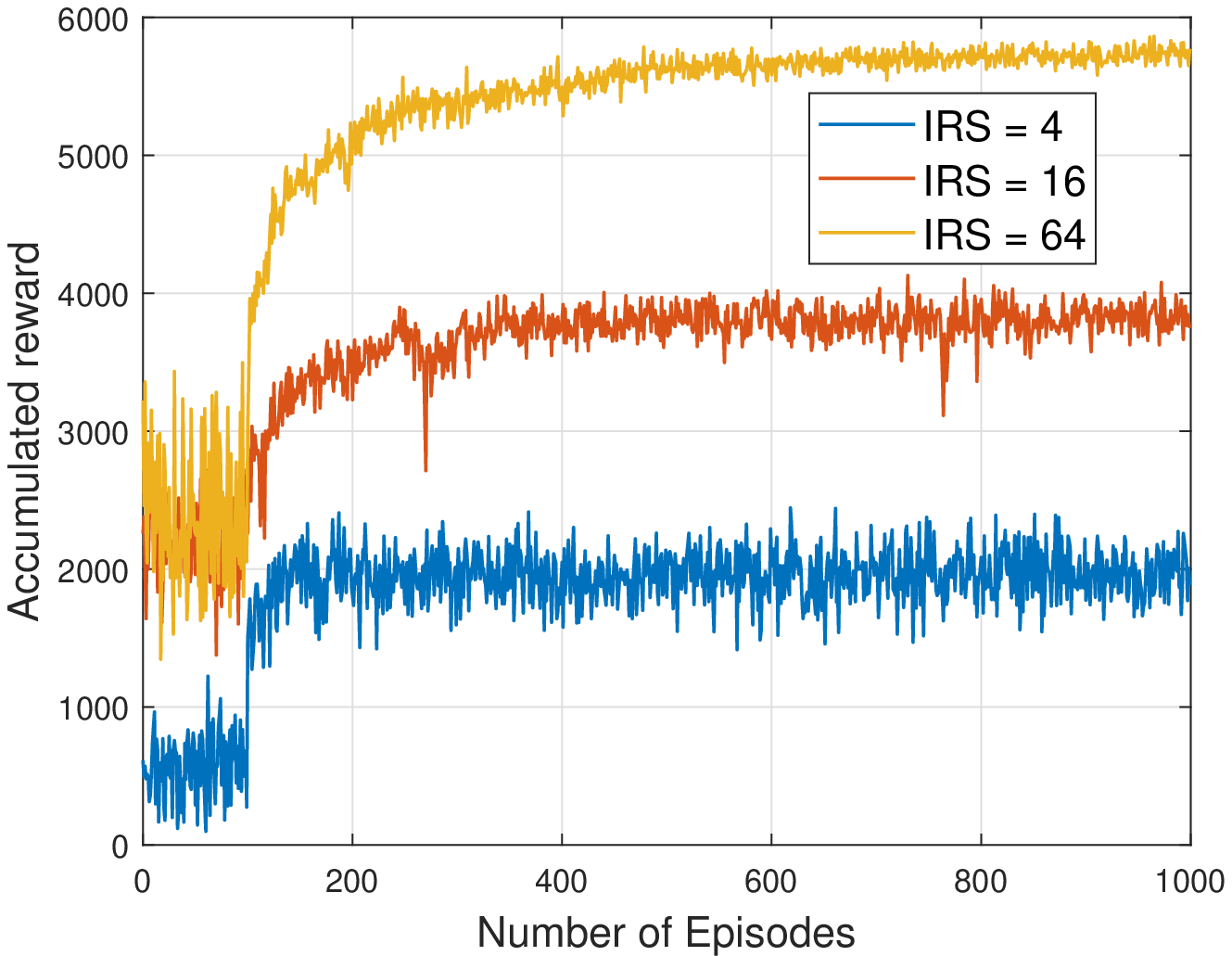}
		\captionsetup{font={scriptsize}}
		\caption{Number of episodes versus accumulated reward for different number of IRS elements $P_t = 10$dB, K =4.}
		\label{IRSvsAR}
	\end{subfigure}%
\hfill
	\begin{subfigure}{0.24\textwidth}
		\centering
		\includegraphics[width=\textwidth]{./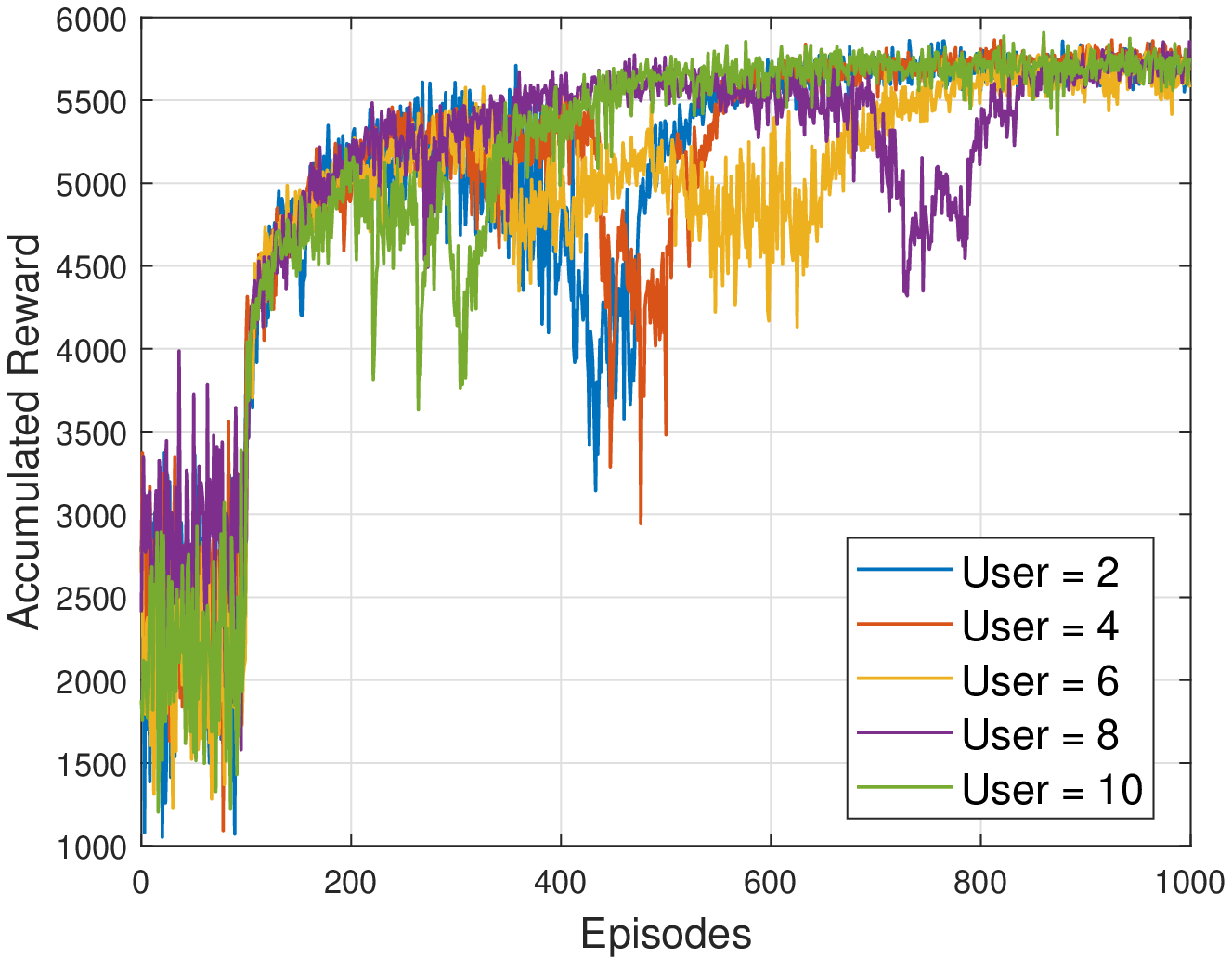}
		\captionsetup{font={scriptsize}}
		\caption{Number of episodes versus accumulated reward for different number of users $P_t = 10$dB, $N = 64$.}
		\label{UservsAR}
	\end{subfigure}
\captionsetup{font={footnotesize}}
\caption{Convergence for different scenarios. }\label{IRSUservsAR}
\vspace{-0.5cm}
\end{figure}
\vspace{-0.4cm}
	In fig.\ref{IRSUservsAR}, the number of episodes versus accumulated reward is shown under different IRS and users setup, where their first 100 episodes is the random data collection stage. The number of neurons for each hidden layer is 300. Fig.\ref{IRSvsAR} illustrates more IRS elements are used, the higher accumulated reward can be obtained. In addition, comparing these three cases, the IRS = 4 case converges before 200 episodes, the IRS = 16 case converges before 400 episodes and the IRS = 64 case converges at around 800 episodes. Hence, for the same DDPG framework training, the fewer the number of elements, the faster the convergence. Hence, increase the number of neurons can improve the convergence speed, but more neurons lead to more calculations. Therefore, it is crucial to build a neural network according to the actual situation. Fig.\ref{UservsAR} reveals what will happen when a BS serves a different number of users. It is clear that these five scenarios start at different levels at the random initialization stage, but converges at the same level after around 800 episodes. In consequence, in this system when the transmit power and the number of IRS elements are fixed, the sum rate cannot be improved by increasing the number of users, as the degrees of freedom available for resource allocation are limited in a downlink system. Hence, it is important to consider that the tradeoff problem between the number of users and the data rate when designing the system. On the other hand, no matter how many IRS elements or users there are, the proposed algorithm is convergent and stable (i.e. it is robust to the number of IRS elements and users).
	
	\begin{figure}
		\centering
		\begin{subfigure}{0.24\textwidth}
			\centering
			\includegraphics[width=\textwidth]{./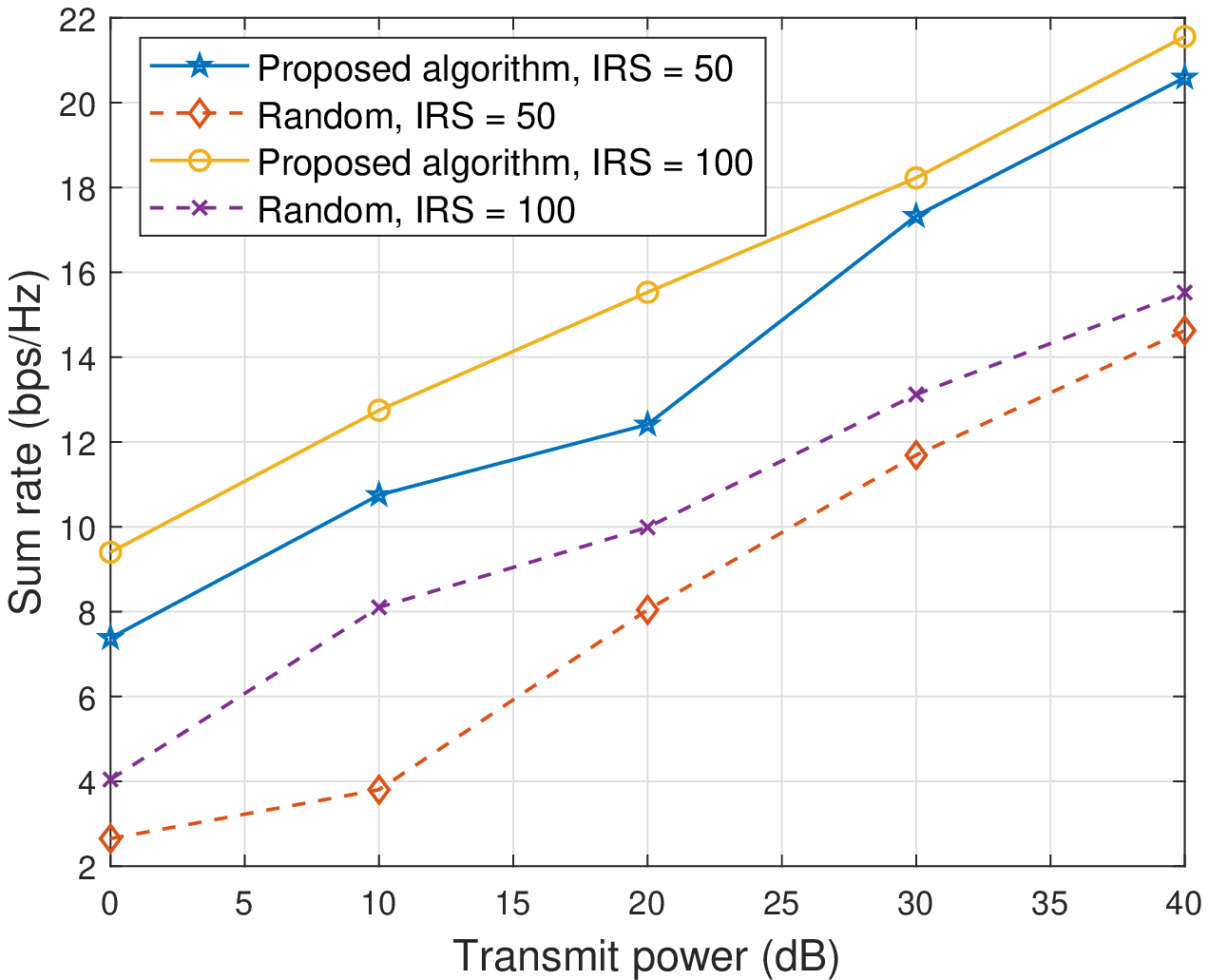}
			\captionsetup{font={scriptsize}}
			\caption{Transmit power versus sum rate $K = 4$.}
			\label{PvsR}
		\end{subfigure}%
		\hfill
		\begin{subfigure}{0.24\textwidth}
			\centering
			\includegraphics[width=\textwidth]{./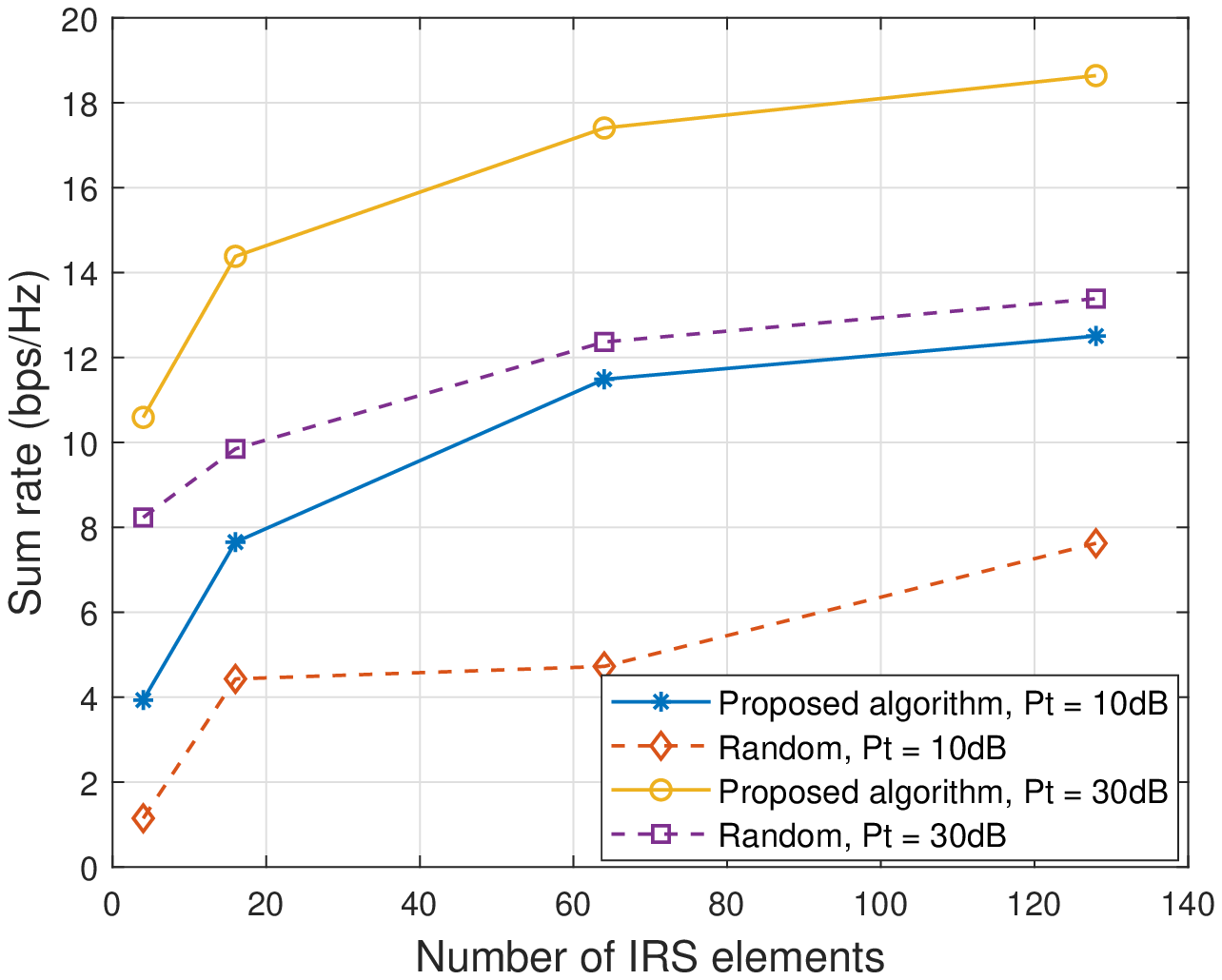}
			\captionsetup{font={scriptsize}}
			\caption{Number of IRS elements versus sumrate, $K = 4$.}
			\label{IRSvsR}
		\end{subfigure}
	\captionsetup{font={footnotesize}}
		\caption{Sum rate comparison for different scenarios.}\label{PIRSvsR} 
		\vspace{-0.5cm}
\end{figure}

	Fig.\ref{PvsR} illustrates the sum rate versus maximum transmit power $P_t$. Consider two cases of system parameters setting, one is IRS elements $N = 50$ and the other one is $N = 100$. As can be seen, the proposed algorithm outperforms the random case significantly for all considered transmit power, even the optimized case for $N = 50$ is much better than the random case for $N = 100$.  	
	To further demonstrate the proposed algorithm's performance, we carried out the algorithm for scenarios of different number of IRS elements, as Fig.\ref{IRSvsR} shown. It can be seen that the sum rate increases with the increase of IRS elements quantity. Nevertheless, the phase shift is complex, one IRS elements increasing means two more training data, and the number of neuroses depends on the training data. Although increasing IRS elements is a good way to enhance the sum rate, a huge number of IRS and neurons will cause higher calculation complexity then make non-negligible output latency. Hence, the tradeoff problem on sum rate and complexity has to be considered in practical construction.
\vspace{-0.3cm}
	\section{Conclusion}
	
	This paper investigated the sum rate maximizing problem in an IRS-UAV-NOMA downlink network. Power allocation of the BS, the IRS phase shift and the UAV position are jointly optimized by applying the proposed DDPG based algorithm efficiently. Rearranging the decoding order according to the current channel environment in each step is an efficient way to guarantee the SIC implementing successfully. Computer simulations have shown that the proposed algorithm can be applied in the time-varying channel environment to enhance the sum-rate performance significantly, as well as is robust to the number of IRS elements and users.	
	\setlength{\textfloatsep}{0pt}
		\begin{algorithm}[t]
		\caption{Proposed DDPG-based algorithm}
		\label{alg:1}
		\begin{algorithmic}[1]
			\STATE \textbf{Initialization:} Randomly initialize the critic evaluation network $Q(s,a;\theta_q)$ and the actor evaluation network $\mu(s;\theta_{\mu})$ with their corresponding parameters $\theta_q$ and $\theta_{\mu}$. Initialize the critic target network $Q'(s,a;\theta_{q'})$ and the actor target network $\mu'(s;\theta_{\mu'})$ with parameters $\theta_{q}\leftarrow \theta_{q'}$ and $\theta_{\mu}\leftarrow\theta_{\mu'}$.
			
			Initialize the experience replay buffer $\mathcal{D}$ with capacity $\mathcal{C}$.
			
			Initialize the learning rate $\beta$, the discount factor $\lambda$, the soft update coefficient $\tau$ and the minibatch size $N_B$. 
			\FOR{episode $j = 1, \cdots, J$}
			\STATE Randomly initialize the phase shift matrix $\Phi^{(j)}$ and obtain channel vectors $G^{(j)}$ and $h_{rk}^{(j)}$,  users' position $u_k(x_k,y_k)$. Initialize the UAV's position $v(x,y)$ at a fixed point. Average initializing the power allocation coefficient $\rho_k = \frac{1}{K}, k = 1, \cdots, K$.
			\STATE Decide the decoding order according to (\ref{channelgain}).
			\STATE Calculate each user's data rate by using (\ref{SINRjtoj}).
			\STATE Obtain the initial observed state $s_1$ (\ref{st}).
			\STATE Initialize the random process $\mathcal{N}$ for action exploration.
			\FOR{step $t = 1, \cdots,T$}
			\STATE Choose action $a_t = \mu(s;\theta_{\mu}) + \mathcal{N}_t$.
			\STATE Extract corresponding actions to obtain phase shift $\Phi^{(t)}$, power allocation coefficients $\rho^{(t)}$, UAV position $v$ and reform $\Phi^{(t)}$ into $\Phi^{(t)} = \textnormal{diag}(e^{\theta_1^{(t)}},\cdots, e^{\theta_N^{(t)}})$.
			\STATE Decide the current decoding order according to the current channel quality.
			\STATE Calculate each user's date rate according to (\ref{SINRjtoj}) and the sum rate to obtains the reward $r_t$ and the new state $s_{t+1}$.
			\STATE Set $r_t$ to 0 if (\ref{6b}) can not be satisfied.
			\STATE Store transition $\{s_t,a_t,r_t,s_{t+1}\}$ into the replay buffer $\mathcal{D}$.
			\STATE Sample $N_B$ minibatch transitions from $\mathcal{D}$ to train.
			\STATE Calculate target Q value by the equation (\ref{yi}).
			\STATE Update the critic evaluation network $Q(s,a;\theta_q)$ by minimizing the loss function (\ref{lf}).
			\STATE Update the actor evaluation network $\mu(s;\theta_{\mu})$ by using the sampled policy gradient in (\ref{pg}) .
			\STATE Update two target networks by using soft update (i.e.(\ref{softupdate})).
			\STATE Transfer state $s_t$ to $s_{t+1}$.
			\ENDFOR
			\ENDFOR
		\end{algorithmic}  
			
	\end{algorithm}
	\vspace{12pt}
	\bibliographystyle{IEEEtran}
	\vspace{-0.5cm}
	\bibliography{myrefsletter.bib}
\end{document}